\documentclass[12pt]{article}

\usepackage{graphicx}
\usepackage{amssymb}
\usepackage{amsthm}
\usepackage{geometry}
\geometry{a4paper}
\usepackage{amsmath}
\usepackage{color}  
\usepackage{pstricks}
\usepackage{pst-node}
\usepackage{pst-plot}
\usepackage{epstopdf}
\usepackage{caption}
\usepackage{algpseudocode} 
\usepackage{algorithm}
\DeclareGraphicsRule{.tif}{png}{.png}{`convert #1 `dirname #1`/`basename #1 .tif`.png}

\theoremstyle{plain}
\newtheorem{theorem}{Theorem}

\newtheorem{lemma}{Lemma}
\newtheorem{corollary}{Corollary}

\newcommand{\etal}{\emph{et al. }}

\newcommand{\ie}{{\it i.e.}, }   
\newcommand{\eg}{{\it e.g.}, }  
\newcommand{\cf}{{\it cf.}, }    
\newcommand{\np}{\mathsf{NP}}
\newcommand{\p}{\mathsf{P}}


\newcommand{\bone}{\mathbf{1}}  
\newcommand{\pow}{\mathcal{P}} 

\newcommand{\bs}[1]{\boldsymbol{#1}} 
\newcommand{\calU}{\mathcal{U}}
\newcommand{\calS}{\mathcal{S}}
\newcommand{\calC}{\mathcal{C}}
\newcommand{\bcalC}{\bs \calC} 
\newcommand{\bcalS}{\bs \calS} 

\def\mathbi#1{\textbf{\em #1}}
\newcommand{\A}{\mathbi{A}} 
\newcommand{\G}{\mathbi{G}} 

\algtext*{EndWhile}
\algtext*{EndIf}
\algtext*{EndFor}

\title{Algorithm Instance Games}

\author{	Samuel D. Johnson \\
		 University of California, Davis \\
		 \texttt{samjohnson@ucdavis.edu} \\
		 \and
		 Tsai-Ching Lu \\
		 HRL Laboratories, LLC \\
		 \texttt{tlu@hrl.com}
		 }

\date{\today}

\begin{document}
\maketitle

\begin{abstract}
This paper introduces \emph{algorithm instance games (AIGs)} as a conceptual classification applying to games in which outcomes are resolved from joint strategies \emph{algorithmically}. For such games, a fundamental question asks: \emph{How do the details of the algorithm's description influence agents' strategic behavior?}

We analyze two versions of an AIG based on the \emph{set-cover} optimization problem. In these games, joint strategies correspond to instances of the set-cover problem, with each subset (of a given universe of elements) representing the strategy of a single agent. Outcomes are covers computed from the joint strategies by a set-cover algorithm. In one variant of this game, outcomes are computed by a deterministic greedy algorithm, and the other variant utilizes a non-deterministic form of the greedy algorithm. We characterize Nash equilibrium strategies for both versions of the game, finding that agents' strategies can vary considerably between the two settings. In particular, we find that the version of the game based on the deterministic algorithm only admits Nash equilibrium in which agents choose strategies (\ie subsets) containing at most one element, with no two agents picking the same element. On the other hand, in the version of the game based on the non-deterministic algorithm, Nash equilibrium strategies can include agents with zero, one, or every element, and the same element can appear in the strategies of multiple agents. 
\end{abstract}

\section{Introduction}
\label{sec:intro}
Since its inception, the field of \emph{algorithmic game theory} has explored the consequences of placing algorithms in strategic settings. In this article, we take up a thread that can be traced through the tapestry of AGT research but -- to the best of our knowledge -- has not previously been explicitly identified. At its core, this thread involves settings in which outcomes are derived from joint strategies \emph{algorithmically}. In these settings, joint strategies may be viewed as instances that are passed to an algorithm tasked with returning outcomes, and the description of the game itself is tied to the (possibly implicit) use of a specific algorithm. To highlight the algorithm's role, we refer to these games as \emph{algorithm instance games (AIGs)}. 

Consider a \emph{hiring game} in which there is a single employer and $M$ candidates. Suppose that there is a set $\calU$ of all possible skills, and each candidate $i$ must invest in acquiring a subset $S_i \subset \calU$ of these skills. Skills are costly, with each one costing a fixed amount $\alpha > 0$. The candidates each submit a resume listing their skills, and the employer hires a subset of candidates, $C \subset M$, that has the broadest collective skill set but with the fewest number of individual candidates. If a candidate $i$ is hired, they receive a reward $\beta > 0$ less the cost spent on acquiring the skill set $S_i$; otherwise, candidate $i$ receives no reward but is still out the cost sunk into acquiring their skill set. From the perspective of the employer, the hiring game is clearly equivalent to the \emph{set-cover} optimization problem: presented with a collection of skill (sub)sets (\ie the candidates' resumes), select the fewest candidates who collectively cover the most skills. Because the hiring game's outcomes are computed algorithmically, and the realized outcome for a given joint strategy may be differ according to exactly which set-cover algorithm is used, a complete description of the hiring game must include a description of the algorithm used by the employer. 

For example, consider a hiring game in which there are three skills, $\calU = \{e_1, e_2, e_3\}$, and four candidates with skill sets $S_1 = \{e_1, e_2\}$, $S_2 = \{e_2, e_3\}$, $S_3 = \{e_3\}$, and $S_4 = \{e_1\}$. In this case, there are three equally optimal (from the perspective of the employer) covers\footnote{The three optimal covers are $\calC_a = \{S_1, S_2\}$, $\calC_b = \{S_1, S_3\}$, and $\calC_c = \{S_2, S_4\}$.} and which of these outcomes is realized will in general vary between one algorithm and another depending on the algorithm's internal decision making. More to the point, the algorithm's decisions will influence agents' strategy selections. Accordingly, a central research question motivating our research into AIGs asks: \emph{How does the implementation of the AIG's algorithm influence strategic decision making?}

In this paper, we examine the \emph{set-cover instance game (SCIG)}, an AIG based on the (unweighted) $\np$-complete set-cover problem (and equivalent to the hiring game example given above). We characterize Nash equilibrium strategies for two variants of the SCIG, each based on a version of the greedy set-cover approximation algorithm (for a performance analysis of the greedy algorithm, see \cite{Hochbaum1998} and the surveys \cite{Hoc1997-3,vaz2003,DoAA}). Both greedy algorithm iteratively builds a solution by selecting the subset with the most as yet uncovered elements, and is guaranteed to return a cover that is at most a logarithmic factor worse than the optimal (\ie minimal-cardinality) cover. The two versions of the greedy algorithm that we analyze correspond to \emph{deterministic} and \emph{non-deterministic} versions. In the deterministic version, the algorithm iterates following a fixed ordering of the subsets (that is known to the agents); and in the non-deterministic version, the algorithm iterates over the subsets following a uniform random permutation. Stated in terms of the hiring game, the two versions of the greedy algorithm can be thought of as follows: in the deterministic version, the employer always looks through the candidates' applications in alphabetical order, while in the non-deterministic version, the employer first shuffles the candidates' applications and then looks though it in which ever order happened to come out of this shuffling.

\subsection{Discussion and Related Work}
\label{sec:intro:related}
Interpreted loosely, pretty much any game could be considered an AIG. After all, a giant lookup table that maps joint strategies to outcomes could be considered an algorithm. However, it is our intention that AIGs be construed more narrowly, reserving the designation for settings in which an understanding of the algorithmic aspects of the outcome resolution function that maps joint strategies to outcomes offers fundamental insights. For instance, one can define an AIG based on a combinatorial search or optimization problem and a corresponding search or optimization algorithm, just as we do in this paper with the SCIG. Another one could define an AIG based on an algorithm that simulates the decision making procedures of an individual or organization. Or, an AIG could be based on the implementation of a game theoretic mechanism that uses an  approximation algorithm to determine outcomes. Our motivation for conceiving the designation \emph{``algorithm instance game''} is to identify a natural, algorithmic perspective that has previously gone unidentified (in any general sense) in the algorithmic game theory literature, and we believe that this perspective will be of broader interest to the algorithmic game theory community. 

A number of existing mechanisms and games have characteristics that we believe qualify them as AIGs. For example, consider the \emph{winner determination problem (WDP)} for combinatorial auctions. The WDP is an optimization problem in which an algorithm (the mechanism) is given a bidding profile and must compute a feasible allocation of goods maximizing social welfare (\cf \cite{Cramton2006ch12}). The utility for an individual agent depends on the allocation they receive from the mechanism. In its full generality, the WDP problem presents a number of difficult computational challenges including the representation of bids \cite{Cramton2006ch9} and the computational complexity of finding the optimal allocation \cite{Rothkopf1998} (see also \cite{AGT2007ch11}). Special cases of the WDP corresponding to classic $\np$-hard optimization problems like integer linear programs and the knapsack problem have been studied algorithmically; for example, Bartal \etal \cite{Bartal2003} established inapproximability bounds on the social welfare of WDP mechanisms and Zurel and Nissan \cite{ZN2001} and Lehmann \etal \cite{Lehmann2006} analyze a WDP mechanism based on a greedy approximation algorithm. 

The \emph{PageRank game}, studied by Hopcroft and Sheldon \cite{Hopcroft2007,Sheldon2008}, can certainly be deemed an AIG, and their approach to analyzing the game's best-response strategies is consistent with the AIG perspective. In this network formation game, agents correspond to nodes and their strategies are sets of directed edges (from themselves to others). An agent's utility is proportional to their PageRank in the network comprised of all of the agents' directed links. In their equilibrium analysis, the authors find that an agent's best response is to only build links to those other agents whom link directly to them (\ie to reciprocate incoming links). Their result supports some well-known search engine optimization techniques (\eg link spam), and in so doing demonstrates an inadequacy of the PageRank algorithm as a reputation mechanism since it incentivizes agents to strategically build links that are inconsistent with the reputation system's objectives.

Finally, a number of covering games have appeared in the algorithm game theory literature in recent years\cite{Buchbinder2008,Cardinal2010,Escoffier2010,Piliouras2012,Balcan2013}. Although they are based on covering problems like set-cover, the work presented in these papers has little in common with AIGs or the SCIG presented in the current paper. Typically, agents in these covering games correspond to elements in the universal set, and the subsets are given as part of the specification of the game. An agent's strategy then corresponds to a selection of subsets containing their element, and the joint strategy induces a covering of the agents/elements.  The distinction is that, in these covering games, joint strategies correspond to \emph{solutions} to a covering problem, whereas in our work, the joint strategies correspond to \emph{instances} of a covering problem.

\subsection{Summary of Results}
\label{sec:intro:results}

Our main analytical results are concerned with characterizing pure strategy Nash equilibrium for the two versions of the SCIG based on the greedy algorithm. We find that the two algorithms can result in Nash equilibrium that are considerably different.  In the SCIG based on the deterministic greedy algorithm, all Nash equilibrium strategies involve agents choosing at most one element, with no two agents choosing the same element. On the other hand, in the version of the SCIG based on the non-deterministic greedy algorithm, Nash equilibrium strategies can include agents with zero, one, or all $|\calU|$ elements. Moreover, if the equilibrium includes one agent choosing every element of $\calU$ in their strategy, there must be at least one other agent that also chooses every element of $\calU$ in their strategy. When the equilibrium strategy includes an agent with only one element, then there cannot simultaneously be another agent with more than one element in their strategy. Our main findings are summarized in Theorems~\ref{thm:d:ne} and \ref{thm:n:ne} for the deterministic and non-deterministic versions of the SCIG, respectively. 

The remainder of this paper is organized as follows: Section~\ref{sec:model} presents the SCIG model and Section~\ref{sec:analysis} presents our analysis of Nash equilibrium for the deterministic (\S~\ref{sec:analysis:d}) and non-deterministic (\S~\ref{sec:analysis:d}) versions of the game. The paper concludes with some final remarks in Section~\ref{sec:con}.

\section{Model}
\label{sec:model}

The \emph{set-cover instance game (SCIG)} is an instance game that is derived from the \emph{set-cover} combinatorial optimization problem. Recall that an instance $\langle \calU, \calS \rangle$ of the set-cover problem consists of a universe of elements $\calU = \{e_1, \dots, e_n\}$ and a collection of subsets $\calS = \{S_1, \dots, S_m\}$ where each $S_i$ is a subset of $\calU$. The objective is to find a minimal cardinality collection $\calC \subseteq \calS$ that covers all of the elements in $\calU$; \ie $\calU = \bigcup_{S \in \calC} C$. 

In the SCIG, agents' strategies correspond to subsets of the universe of elements, $\calU$. Let $M = \{1, \dots, m\}$ denote the set of $m$ strategic agents. For each agent $i \in M$, $\bcalS_i = \pow(\calU)$ defines $i$'s strategy space,\footnote{$\pow(X)$ denotes the \emph{power-set} of $X$; \ie the set of all possible subsets of $X$.} where a pure strategy $S_i \in \bcalS_i$ is a (possibly empty) subset of $\calU$. Let $\bcalS = \bcalS_1 \times \cdots \times \bcalS_m$ denote the joint strategy space, with a joint strategy $\calS = (S_1, \dots, S_m) \in \bcalS$ representing a collection of $m$ subsets. A joint strategy $\calS$ and a (possibly truncated) universal set $\calU' = \bigcup_{i \in M} S_i$ designates an instance $\langle \calU', \calS \rangle$ of the set-cover problem. To simplify our notation, we will sometimes use $\calS$ to refer to both the joint strategy and the set-cover instance that it represents.

To finish the description of the SCIG, we need to specify a set-cover algorithm, $\A$. This algorithm is responsible for producing outcomes from joint strategy profiles. That is, given a joint strategy $\calS$, the algorithm returns a cover $\calC \leftarrow \A(\calU', \calS)$, and agents' utilities are then calculated with respect to this cover. Formally, given the joint strategy $\calS$, the utility for agent $i \in M$ is defined as
	\begin{equation}
	\label{eq:scig:utility}
	u_i(\calS) = \beta \bone(S_i \in \calC) - \alpha | S_i |,
	\end{equation}
where $\calC$ is the cover returned by the algorithm $\A$ given the set-cover instance $\calS$; $\beta > 0$ and $\alpha > 0$ are exogenously specified parameters (constants) that convey the benefit of being selected in the cover ($\beta$) and the cost of including elements in a strategy ($\alpha$); and $\bone(X)$ is an indicator function that equals one when condition $X$ is true and equals zero otherwise. 
								
If $\A$ is a non-deterministic algorithm, then it will not necessarily return a unique cover for a given instance. Let $\bcalC(\calS)$ be a random variable that takes as values the covers returned by $\A$ given the instance $\calS$, with $\Pr[\bcalC(\calS) = \calC]$ being the probability that a particular cover $\calC$ is returned. For a joint strategy $\calS$, the probability that an agent $i$'s strategy $S_i$ is part of a cover returned by the algorithm is 
	\begin{equation*}
	\Pr[S_i \in \bcalC(\calS)] = \sum_{\calC \in \bcalC(\calS)} \bone(S_i \in \calC) \Pr[\bcalC(\calS) = \calC].
	\end{equation*}
We define the (expected) utility for an agent $i \in M$ given a joint strategy $\calS$ and a non-deterministic algorithm $\A$ as
	\begin{equation}
	\label{eq:scig:utility2}
	u_i(\calS) = \beta \Pr[S_i \in \bcalC(\calS)] - \alpha | S_i |.
	\end{equation}

\subsection{The Greedy Approximation Algorithm}
\label{sec:model:alg}

\begin{figure}[t]
\begin{algorithm}[H]
	\begin{algorithmic}[1]
		\Require $\calU = \{e_1, \dots, e_n\}$ is a universe of elements.
		\Require $\calS = \{S_1, \dots, S_m\}$ with $S_i \subseteq \calU$ for $i \in \{1, \dots, m\}$ is a collection of subsets.
		\Require $\calU = \bigcup_{S_i \in \calS} S_i$.
		\Function{$\G$}{$\calU, \calS$}
    			\State $\calC \gets \emptyset$
			\State $\calU' \gets \calU$
			\State $\pi \gets \text{permute}(1, \dots, m)$
			\While{$\calU' \neq \emptyset$}
				\State $S^* \gets S_{\pi(1)}$
				\For{$i = 2, 3, \dots, m$}
					\If{$| S^* \cap \calU' | < | S_{\pi(i)} \cap \calU' |$}
						\State $S^* \gets S_{\pi(i)}$
					\EndIf
				\EndFor
				\State $\calU' \gets \calU' \setminus S^*$
				\State $\calC \gets \calC \cup \{ S^* \}$
			\EndWhile 
			 \State \Return $\calC$
		\EndFunction
	\end{algorithmic}
	\caption{Greedy set-cover algorithm.}
	\label{alg:sc:greedy}
\end{algorithm}
\caption{In the deterministic version, $\G_d$, the permutation in line 4 is restricted to $\pi(i) = i$, for all $i = 1, \dots, m$. In the non-deterministic version, $\G_n$, the permutation $\pi$ is chosen uniformly at random from all permutations of $1, \dots, m$.}
\label{fig:sc:greedy}
\end{figure}

As an $\np$-complete problem, there is no known efficient set-cover algorithm that is guaranteed to always return an optimal cover $\calC$ \cite{GJ1979}. Here, we analyze two variants of the simple \emph{greedy} approximation algorithm, $\G$ (see Figure~\ref{fig:sc:greedy}): a \emph{deterministic} version that always iterates through the collection of subsets $\calS = (S_1, \dots, S_m)$ in a fixed order that is known  by every agent; and a \emph{non-deterministic} version that iterates through $\calS$ following a permutation $\pi$ that is selected uniformly at random from all permutations of $1, \dots, m$. Through the remainder of this paper, we will let $\G_d$ denote the deterministic greedy algorithm, and $\G_n$ denote the non-deterministic greedy algorithm. Also, we will use SCIG($\G_d$) and SCIG($\G_n$) to denote the set-cover instance games based on the algorithms $\G_d$ and $\G_n$, respectively. 

We note that the greedy algorithm is guaranteed to return a cover that is at most a $O(\log n)$ factor worse than optimal, which (assuming $\p \neq \np$) is essentially the best possible approximation one can get from a polynomial-time algorithm for the set-cover problem \cite{Hoc1997-3,vaz2003,DoAA}.

\section{Equilibrium Analysis}
\label{sec:analysis}

In this section we characterize the pure Nash equilibrium outcomes for the set-cover instance game. Recall that a joint strategy $\calS = (S_1, \dots, S_m)$ is a pure strategy Nash equilibrium if, for every agent $i \in M$ and every strategy $S_i' \in \bcalS_i$,
	\begin{equation*}
	u_i(\calS) \geq u_i(S_i', \calS_{-i})
	\end{equation*}
where $\calS_{-i}$ denotes the joint strategy of every agent $j \neq i$.

\subsection{The Deterministic Case}
\label{sec:analysis:d}

Pure strategy Nash equilibrium for the SCIG based on the deterministic greedy algorithm $\G_d$ consist exclusively of agents with zero or one-element strategies -- see Theorem~\ref{thm:d:ne} below. The basic intuition into why strategies with more that a single element are not stable follows from observing two facts. First, as we show in Lemma~\ref{lem:d:no_overlap} below, no two agents $i$ and $j$ will share any common elements in a Nash equilibrium. Because $\G_d$ is deterministic, if the strategies of both $i$ and $j$ get selected then it will be the case that one of them is always selected before the other. Therefore, the agent that is selected second can safely drop those common elements and still ensure that their strategy is part of the cover returned by $\G_d$. The second fact is due to the definition of a Nash equilibrium -- that an agent must be playing their best-response strategy given all other agents' strategies remaining unchanged. Putting these two facts together, the best-response for an agent $i$ whose strategy $S_i$ is selected by $\G_d$ is to switch to the strategy $S_i' = \{e_k\}$ for some $e_k \in S_i$ since $e_k$ is not going to be part of any other agent's strategy in Nash equilibrium and $S_i'$ will still get chosen by $\G_d$ but at a lower cost.

\begin{theorem}
\label{thm:d:ne}
In the SCIG($\G_d$), every Nash equilibrium strategy $\calS$ has agents with non-overlapping strategies. Moreover,
	\begin{enumerate}
	\item \label{thm:d:ne:1} 
	If $\alpha > \beta$, then the only Nash equilibrium has $S_i = \emptyset$ for all $i \in M$.
	\item \label{thm:d:ne:2} 
	If $\beta / 2 \leq \alpha < \beta$, then Nash equilibrium strategies have the first $\ell_1 = \min \{m, n\}$ agents with distinct, single-element strategies and all remaining $\ell_0 = \max \{0, m - \ell_1 \}$ agents have empty strategies.
	\item \label{thm:d:ne:3}
	If $\alpha = \beta$, then a Nash equilibrium strategy will involve up to $n$ agents choosing distinct, single-element strategies and all other agents choosing empty strategies.
	\end{enumerate}
\end{theorem}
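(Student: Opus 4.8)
The plan is to combine the non-overlap guarantee of Lemma~\ref{lem:d:no_overlap} with a cost-based reduction that collapses every equilibrium strategy to at most one element, and then to read off the three regimes from the sign of $\beta-\alpha$. First I would record a fact used throughout: if a strategy is a single element $\{e\}$ that no other agent shares (which holds in equilibrium by Lemma~\ref{lem:d:no_overlap}), then $\{e\}$ is the unique set in the instance covering $e$, so $\G_d$ must select it in order to empty $\calU'$; hence such an agent earns $\beta-\alpha$, whereas $S_i=\emptyset$ always yields $0$. Next I would prove the universal reduction that in any Nash equilibrium no agent plays $S_i$ with $|S_i|\ge 2$: if such an $S_i$ is selected, then by non-overlap every element of $S_i$ is unique to $i$, so deviating to any singleton $\{e\}\subseteq S_i$ keeps $i$ in the cover while cutting cost, raising utility from $\beta-\alpha|S_i|$ to $\beta-\alpha$; and if $S_i$ is not selected, deviating to $\emptyset$ raises utility from $-\alpha|S_i|$ to $0$. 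Either way $S_i$ is not a best response, so every equilibrium strategy lies in $\{\emptyset\}\cup\{\{e\}:e\in\calU\}$.

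With this reduction in hand the endpoint cases are quick. For part~\ref{thm:d:ne:1} ($\alpha>\beta$) a selected singleton yields $\beta-\alpha<0$ while $\emptyset$ yields $0$, so $\emptyset$ strictly dominates every nonempty strategy; the all-empty profile is therefore the only candidate, and it is an equilibrium because any deviation to a nonempty $S_i'$ nets at most $\beta-\alpha|S_i'|\le\beta-\alpha<0$. For part~\ref{thm:d:ne:3} ($\alpha=\beta$) a selected singleton and $\emptyset$ both yield $0$, so agents are indifferent between them; I would then verify directly that every profile in which up to $n$ agents hold distinct singletons and the rest are empty is an equilibrium, since no deviation can beat $0$ (stealing or grabbing an uncovered element again yields $\beta-\alpha=0$, while any two-element set yields $\beta-2\alpha=-\beta<0$), and that by the reduction and Lemma~\ref{lem:d:no_overlap} no other profiles survive.

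Part~\ref{thm:d:ne:2} ($\beta/2\le\alpha<\beta$) is where the strict gain $\beta-\alpha>0$ forces real structure, and I would establish it through two constraints. (i) No element of $\calU$ may be left uncovered while some agent plays $\emptyset$, for that agent could deviate to the uncovered singleton, which $\G_d$ selects, netting $\beta-\alpha>0$. (ii) No agent playing $\emptyset$ may carry a smaller index than an agent playing a singleton $\{e\}$: since $\G_d$ breaks ties toward the lowest index, the empty agent could deviate to $\{e\}$, be selected ahead of the incumbent, and again net $\beta-\alpha>0$. Constraint~(ii) forces all nonempty agents onto the lowest indices, and constraint~(i), together with the distinctness from Lemma~\ref{lem:d:no_overlap} and the pigeonhole comparison of $m$ and $n$, pins the number of nonempty agents at $\min\{m,n\}$; combining these yields exactly the first $\ell_1=\min\{m,n\}$ agents holding distinct singletons and the remaining $\ell_0$ empty.

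It then remains to verify that this profile is an equilibrium, and here I expect the main obstacle. The delicate deviation is an agent switching to a two-element set: because both of its elements are initially uncovered, that set has coverage $2$ and is therefore selected \emph{first} by $\G_d$, yielding $\beta-2\alpha$. This beats the status quo precisely when $\alpha<\beta/2$, so it is exactly the hypothesis $\alpha\ge\beta/2$ that makes the deviation unprofitable ($\beta-2\alpha\le0$) and the configuration stable; correctly tracing the tie-breaking of $\G_d$ through this and the remaining deviations (to $\emptyset$, to an already-covered singleton where the incumbent's lower index keeps the deviator out of the cover, and to larger sets) is the technically fiddly heart of the argument, with each such deviation checked to net at most $\beta-\alpha$.
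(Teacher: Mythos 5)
Your proof is correct, and its skeleton matches the paper's: the non-overlap property of Lemma~\ref{lem:d:no_overlap}, a reduction showing every equilibrium strategy is empty or a singleton (the paper's Lemma~\ref{lem:d:cycle}), and a case analysis on $\alpha$ versus $\beta$. The substantive difference is that your treatment of part~\ref{thm:d:ne:2} supplies two arguments the paper omits entirely. First, the paper never justifies the positional claim that it is the \emph{first} $\ell_1$ agents who hold the singletons; your tie-breaking observation --- $\G_d$'s strict-inequality comparison retains the lowest-indexed maximizer, so an empty agent with a smaller index than a singleton agent could copy that agent's element, be selected ahead of the incumbent, and net $\beta - \alpha > 0$ --- is exactly what makes that part of the statement true, and no trace of it appears in the paper's proof. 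Second, the paper's proof of part~\ref{thm:d:ne:2} never actually invokes the hypothesis $\alpha \geq \beta/2$; you correctly identify that it is needed to rule out the deviation to a two-element set, which $\G_d$ selects first (it initially covers two uncovered elements) and which nets $\beta - 2\alpha$, nonpositive precisely when $\alpha \geq \beta/2$. In the paper this deviation surfaces only later, in the non-existence result of Theorem~\ref{thm:d:br}, and is never connected back to the equilibrium characterization. You also check sufficiency --- that the described profiles really are equilibria under each parameter regime --- which the paper leaves implicit. So your write-up follows the same route but closes genuine gaps in the published argument; the only caveat is that your final verification step is sketched rather than carried out deviation by deviation, though the bounds you state ($\beta-\alpha$ for singleton deviations, $\beta - k\alpha \leq \beta - 2\alpha$ for larger sets, $-\alpha$ for stealing a lower-indexed incumbent's element) are the right ones.
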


The proof of Theorem~\ref{thm:d:ne} is presented below. First, we establish the a couple of lemmata. Lemma~\ref{lem:d:no_overlap} establishes that in a Nash equilibrium, no two agents' strategies share common elements; and Lemma~\ref{lem:d:cycle} bounds the maximum size of an agent $i$'s strategy $S_i$ in a Nash equilibrium to $|S_i| \leq 1$.

\begin{lemma}
\label{lem:d:no_overlap}
A joint strategy $\calS$ in which there exist a pair of (distinct) agents $i, j \in M$ with $S_i \cap S_j \neq \emptyset$ cannot be a Nash equilibrium. 
\end{lemma}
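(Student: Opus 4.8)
The plan is to argue by contradiction: assume that some joint strategy $\calS$ containing a shared element, say $e \in S_i \cap S_j$ for distinct $i,j$, is a Nash equilibrium, and then exhibit a profitable unilateral deviation. The first reduction is to observe that in any Nash equilibrium every \emph{nonempty} strategy must appear in the cover returned by $\G_d$. Indeed, since $\calU' = \bigcup_k S_k$ always contains a set covering any remaining uncovered element, $\G_d$ never selects the empty set, so an agent playing $\emptyset$ earns utility $0$; hence any agent whose nonempty strategy is omitted from the cover earns $-\alpha|S_i| < 0$ and would strictly prefer to deviate to $\emptyset$. Because $S_i$ and $S_j$ both contain $e$, they are both nonempty, so in the putative equilibrium both must lie in the cover $\calC$.

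Next, since $\G_d$ selects exactly one set per round, $S_i$ and $S_j$ are chosen in distinct rounds; relabel so that $S_i$ is chosen no later than $S_j$. The shared element $e$ is therefore removed from the working universe $\calU'$ strictly before the round in which $S_j$ is selected. The deviation I would analyze is for agent $j$ to drop the shared element, playing $S_j' = S_j \setminus \{e\}$. If I can show that $S_j'$ is still selected by $\G_d$ on the modified instance, then agent $j$'s utility rises from $\beta - \alpha|S_j|$ to $\beta - \alpha(|S_j|-1)$, a strict gain of $\alpha > 0$, contradicting equilibrium.

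The main obstacle, and the technical heart of the argument, is that replacing $S_j$ by $S_j'$ changes the coverage counts that drive the greedy selections, so I must rule out the possibility that the execution veers off course and ejects $S_j'$. I would handle this by coupling the two executions (original and post-deviation) round by round and proving inductively that they select the same sets. The argument splits into two phases. Before $e$ is covered, $S_j'$ has coverage exactly one less than $S_j$ while every other set is unchanged; since $S_j$ is not the set selected in any of these rounds (its round comes later), a short case analysis on the greedy tie-breaking rule (largest coverage, smallest index) shows that lowering only $S_j$'s count cannot change which set wins, so $\calU'$ stays synchronized. Once $e$ is removed --- at the same round in both executions --- we have $|S_j' \cap \calU'| = |S_j \cap \calU'|$ in every subsequent round, so the two executions become literally identical from that point on; in particular $S_j'$ is selected in exactly the round that originally selected $S_j$. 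The only delicate points are verifying the tie-break case analysis (the maximum coverage is unchanged, and $S_j'$ retains agent $j$'s index) and confirming that $e$ is indeed covered before $S_j$'s round, both of which follow from the chosen ordering and from $e \in S_i$.
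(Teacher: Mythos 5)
Your proof is correct and takes essentially the same route as the paper's: first show that in equilibrium both overlapping strategies must appear in the cover (otherwise deviating to $\emptyset$ is a strict improvement), then let the later-selected agent $j$ drop the shared element and argue that $S_j' = S_j \setminus \{e\}$ is still selected, yielding a strict gain of $\alpha$. The only difference is one of rigor: your round-by-round coupling of the original and post-deviation executions of $\G_d$ carefully justifies the ``still selected'' claim, which the paper asserts with only a parenthetical remark.
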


\begin{proof} 
If either $S_i$ is not selected in the cover returned by $\G_d$ then $i$ can strictly improve its utility by choosing the strategy $S_i = \emptyset$, which would clearly share no common elements with $S_j$. A symmetric argument holds for $j$ if $S_j$ is not selected by $\G_d$. 

Suppose now that both $S_i$ and $S_j$ are in the cover returned by $\G_d$, and that $\G_d$ selects $S_i$ before selecting $S_j$. Assume, without loss of generality, that $S_i \cap S_j = \{e_k\}$ and no other subset contains $e_k$. Since $\G_d$ selects $S_i$ before $S_j$ (and $\G_d$ is picking $S_j$ on the merit of $S_j$'s uncovered elements), then $j$ can deviate to a strategy $S_j' = S_j \setminus \{e_k\}$ and still be selected by $\G_d$. Moreover, such a deviation would increase $j$'s utility by $\alpha$.
\end{proof} 

\begin{lemma}
\label{lem:d:cycle}
There does not exist a Nash equilibrium strategy $\calS$ in which there is an agent $i \in M$ with $|S_i| > 1$.
\end{lemma}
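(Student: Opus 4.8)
The plan is to argue by contradiction. Suppose $\calS$ is a Nash equilibrium containing an agent $i$ with $|S_i| > 1$, and exhibit a single-element deviation for $i$ that strictly increases its utility. First I would fix any element $e_k \in S_i$ and consider the deviation $S_i' = \{e_k\}$, holding $\calS_{-i}$ fixed. The heart of the argument is to show that after this deviation $S_i'$ is guaranteed to be selected by $\G_d$, so that $i$ obtains payoff $\beta - \alpha$.

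To see that $S_i'$ must be chosen, I would invoke Lemma~\ref{lem:d:no_overlap}: since $\calS$ is a Nash equilibrium and $e_k \in S_i$, no other agent's strategy contains $e_k$. This uniqueness is unaffected by $i$'s own deviation, so in the post-deviation instance $(S_i', \calS_{-i})$ the element $e_k$ lies in the strategy of agent $i$ alone. Because the greedy while-loop only terminates once every element of the truncated universe $\calU' = \bigcup_j S_j$ is covered -- and each iteration covers at least one uncovered element, so the loop does terminate -- the returned cover is complete. As $S_i' = \{e_k\}$ is the unique subset containing $e_k$, completeness forces $S_i' \in \calC$, giving $u_i(S_i', \calS_{-i}) = \beta - \alpha$, irrespective of the internal order in which $\G_d$ happens to make its selections.

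It then remains to compare payoffs. Using $\bone(S_i \in \calC) \leq 1$, the original payoff satisfies $u_i(\calS) \leq \beta - \alpha |S_i|$, and since $|S_i| > 1$ and $\alpha > 0$ this is strictly less than $\beta - \alpha$. Hence the singleton deviation strictly improves $i$'s utility, contradicting the assumption that $\calS$ is a Nash equilibrium; this rules out $|S_i| > 1$. Note that no case analysis on whether $S_i$ was originally in the cover is needed, since the bound $u_i(\calS) \leq \beta - \alpha |S_i|$ holds regardless.

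The step I expect to be the main obstacle is the guarantee that $S_i'$ is selected by $\G_d$. One must resist the temptation to trace the greedy execution step by step, since a singleton may be selected much later (or in a different position) than the larger set $S_i$ was. The clean route is to bypass the dynamics entirely and rely only on two structural facts: the uniqueness of $e_k$ supplied by Lemma~\ref{lem:d:no_overlap}, and the completeness of any cover returned by greedy. Care is also warranted in confirming that the greedy loop terminates with a complete cover over the correct truncated universe $\calU' = \bigcup_j S_j$ after $i$'s deviation, so that the uniqueness argument applies to the instance actually presented to $\G_d$.
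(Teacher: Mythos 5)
Your proposal is correct and follows essentially the same route as the paper's proof: invoke Lemma~\ref{lem:d:no_overlap} to conclude that in equilibrium the elements of $S_i$ belong to no other agent, deviate to a singleton $S_i' = \{e_k\}$, note that $\G_d$ must still select it since any returned cover is complete and $S_i'$ is the only set containing $e_k$, and compare utilities. Your version is merely more careful than the paper's terse argument (e.g., bounding $u_i(\calS) \leq \beta - \alpha|S_i|$ to avoid a case split on whether $S_i$ was originally selected), but the underlying idea is identical.
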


\begin{proof} 
By Lemma~\ref{lem:d:no_overlap}, the elements in $S_i$ are not in any other agent's strategy. Therefore, $i$ could select one element $e_k \in S_i$ and deviate to the strategy $S_i' = \{e_k\}$ and still be selected by $\G_d$ and increase their utility by $( |S_i| - 1 ) \alpha$.
\end{proof} 

Given Propositions~\ref{lem:d:no_overlap} and \ref{lem:d:cycle}, we can now prove Theorem~\ref{thm:d:ne}.

\begin{proof}[Proof of Theorem~\ref{thm:d:ne}]
Lemma~\ref{lem:d:no_overlap} establishes that agents' strategies do not share any common elements in Nash equilibrium. Part~\ref{thm:d:ne:1} follows from the observation that whenever $\alpha > \beta$ then the cost to an agent of adding even a single element to their strategy is greater than the benefit they would get for being selected in a cover. 

Similarly, part~\ref{thm:d:ne:3} addresses the case that the cost of a single element is equal to the benefit that comes from being selected in a cover. It is easy to see that in either case the best that an agent can hope to do is receive a utility of zero.

Finally, part~\ref{thm:d:ne:2} follows from Lemma~\ref{lem:d:cycle} and the observation that when $\alpha < \beta$, a strategy$S_i = \{e_k\}$ with a distinct, single element $e_k$ that does not belong to any other strategy (\ie $e_k \notin S_j$ for all $j \neq i$) yields a positive utility for agent $i$.
\end{proof} 

We wrap up the analysis of the SCIG($\G_d$) with the following theorem, establishing the non-existence of pure strategy Nash equilibrium for a significant part of the SCIG's parameter space. The intuition behind its proof involves a coupling of the \emph{pigeonhole principle} (\ie there are more agents then elements) with the upper bound on strategy sizes given by Lemma~\ref{lem:d:cycle}.

\begin{theorem}
\label{thm:d:br}
There does not exist a (pure strategy) Nash equilibrium whenever $m > n$ and $\alpha < \beta / 2$.
\end{theorem}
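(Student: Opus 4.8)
The plan is to argue by contradiction. Suppose that under the hypotheses $m > n$ and $\alpha < \beta/2$ there is a pure-strategy Nash equilibrium $\calS = (S_1, \dots, S_m)$; I will produce a profitable unilateral deviation for some agent. First I would apply the two lemmata already in hand: Lemma~\ref{lem:d:cycle} forces $|S_i| \leq 1$ for every agent, and Lemma~\ref{lem:d:no_overlap} forces the nonempty strategies to be pairwise disjoint, hence \emph{distinct} singletons. Because $\calU$ has only $n$ elements, at most $n$ agents can hold such a singleton, so at most $n$ strategies are nonempty. The pigeonhole principle, combined with $m > n$, then guarantees an agent $j$ with $S_j = \emptyset$. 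Since $\G_d$ never adds a zero-coverage set while the working universe is nonempty -- any still-uncovered element lies in some $S_i$, giving that set positive marginal coverage -- the empty set is never placed in the cover, so $u_j(\calS) = 0$.

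The core of the proof is a single deviation for this agent. Assuming $n \geq 2$, I would have $j$ switch to an arbitrary two-element strategy $S_j' = \{e_a, e_b\} \subseteq \calU$. Every other strategy is unchanged and therefore still has size at most one, so at the very first iteration of $\G_d$ the set $S_j'$ has marginal coverage exactly two while every competing set has marginal coverage at most one. Being the \emph{unique} maximizer, $S_j'$ is selected by $\G_d$ regardless of where it sits in the fixed iteration order, so $S_j' \in \calC$ and $j$ now earns $u_j = \beta - 2\alpha$. The hypothesis $\alpha < \beta/2$ is exactly the statement $\beta - 2\alpha > 0 = u_j(\calS)$, so the deviation strictly improves $j$'s payoff and contradicts the equilibrium assumption, completing the argument.

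I expect the crux to be the verification that the deviating set is actually chosen by the \emph{deterministic} greedy algorithm, and this is also where the threshold $\beta/2$ (rather than merely $\beta$) originates. A deviation to a single element already held by a lower-indexed agent could fail to be selected, since $\G_d$ breaks ties by its fixed, known order; the point of the two-element deviation is that \emph{strict} dominance in marginal coverage forces selection independently of the tie-break, at a cost of $2\alpha$, which must be less than $\beta$. One caveat worth recording is that this deviation needs $n \geq 2$: in the degenerate case $n = 1$ the profile in which a single lowest-indexed agent plays $\{e_1\}$ and all others play $\emptyset$ is in fact a Nash equilibrium, so the statement should be read with $n \geq 2$ understood.
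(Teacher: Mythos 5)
Your proof is correct and takes essentially the same approach as the paper's: both argue by contradiction, use Lemmas~\ref{lem:d:no_overlap} and \ref{lem:d:cycle} to reduce any putative equilibrium to pairwise-disjoint singletons plus empty strategies, use $m > n$ and pigeonhole to produce an agent $j$ with $S_j = \emptyset$ and utility zero, and then have that agent deviate to a two-element set, which -- having strictly larger marginal coverage than every size-$\leq 1$ competitor -- is selected in the first iteration of $\G_d$ regardless of the fixed ordering, yielding $\beta - 2\alpha > 0$. The differences are minor and in your favor: the paper first argues that all $n$ elements must be covered in equilibrium and then deviates to $S_i' = S_j \cup S_k$ for two existing singleton agents $j, k$, whereas your deviation to two \emph{arbitrary} elements of $\calU$ lets you skip that intermediate step entirely. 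Your closing caveat is also a genuine catch rather than a quibble: the paper's deviation requires two distinct singleton agents and hence $n \geq 2$, and for $n = 1$ the profile in which the agent appearing first in the iteration order plays $\{e_1\}$ and everyone else plays $\emptyset$ is indeed a Nash equilibrium (ties in $\G_d$ are broken in favor of the earlier position, so a second agent copying $\{e_1\}$ would never be selected and would pay $\alpha$ for nothing), so the theorem is false as literally stated and must be read with $n \geq 2$ -- a hypothesis the paper's own proof needs but never states.
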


\begin{proof} 
Suppose, toward a contradiction, that $\calS$ is a Nash equilibrium. From Propositions~\ref{lem:d:no_overlap} and \ref{lem:d:cycle}, we know that if $\calS$ includes an agent $i$ with $S_i \neq \emptyset$, then it must be that $|S_i| = 1$ and the element in $S_i$ cannot appear in any other $S_j$. 

Since $\alpha < \beta / 2$, an agent with a single-element strategy receives a higher utility than an agent with an empty strategy. Therefore, the joint strategy $\calS$ must include $n = |\calU|$ agents with distinct, single-element strategies, with the remaining $m - n = |M| - |\calU|$ agents having empty strategies. However, such a strategy cannot be a Nash equilibrium since an agent $i$ with $S_i = \emptyset$ can change to a strategy $S_i' = S_j \cup S_k$ for some pair of agents $j, k \in M$ with $| S_j | = | S_k | = 1$, which will ensure their selection by $\G_d$, and earn $i$ a higher utility:
	$$u_i(S_i', \calS_{-i}) = \beta - 2 \alpha > 0 = u_i(S_i, \calS_{-i}).$$
\end{proof} 

\subsection{The Non-Deterministic Case}
\label{sec:analysis:nd}

In the SCIG based on the non-deterministic greedy algorithm $\G_n$, the range of possible Nash equilibrium strategies is a bit richer than those based on the version of the game using the deterministic greedy algorithm. These outcomes are described in Theorem~\ref{thm:n:ne}. 

Our main insight into the SCIG($\G_n$) is captured in Lemma~\ref{lem:n:growing_strat}, which identifies an interesting pressure driving best-response strategy selections and ruling out the existence of Nash equilibrium for a large part of the game's parameter space. The basic idea is as follows: Suppose that there are more agents than elements, and there is an agent $i$ with a strategy $S_i$ such that, for all other agents $j \neq i$, $|S_j| \leq |S_i| < | \calU |$. If the probability that $\G_n$ selects $S_i$ for a cover is less than one, then $i$ could switch to a strategy $S_i'$ that includes all of the elements in $S_i$ and one additional element to increase their probability of being selected by $\G_n$ to one. But now that $S_i'$ is always selected by $\G_n$, any agent $j$ for whom $S_j \cap S_i' \neq \emptyset$ will be better off by switching to a strategy $S_j' \ S_j \setminus S_i'$. This now allows $i$ to deviate to a third strategy $S_i'' = \{e_k\}$ consisting of only a single element $e_k \in S_i'$. Since $e_k$ was unique to $S_i'$ after every other agent $j$ dropped from their strategies elements in $S_i''$, it is therefore still unique to $i$'s strategy after switching to $S_i''$, which ensures that the probability that $S_i''$ is in a cover returned by $\G_n$ remains one. This behavior is similar to what was observed in the SCIG($\G_d$) model, which prevented agents from having strategies with more than a single element in a Nash equilibrium. However, in the SCIG($\G_d$) model, there is the possibility of agents having more than a single element in a Nash equilibrium. But in order for this to happen, at least two agents $i$ and $j$ would need to choose strategies $S_i = S_j = \calU$; see Corollary~\ref{cor:n:full_empty}.

\begin{theorem}
\label{thm:n:ne}
In the SCIG($\G_n$), pure Nash equilibrium strategies are:
	\begin{enumerate}
	\item \label{thm:n:ne:1} 
	If $m \leq n$ and $\alpha \leq \beta$, then Nash equilibrium strategies will involve disjoint, single-element subsets and/or empty subsets.
	\item \label{thm:n:ne:2} 
	If $m > n$ and $\alpha = \beta / 2$, then there are Nash equilibrium strategies in which either:
		\begin{itemize}
		\item All agents have zero- or single-element strategies, and for every agent $i$ with a single-element strategy $S_i = \{e_k\}$, there exists exactly one other agent $j$ with $S_j = \{e_k\}$; or
		\item All agents have single-element strategies with each element appearing in either one or two individual strategies.
		\end{itemize}
	\item \label{thm:n:ne:3} 
	For positive integers $\rho \geq 2$ satisfying $m > \rho n$ and $\beta / (1 + \rho) n < \alpha \leq \beta / \rho n$, then Nash equilibrium strategies will have $\rho$ (or possibly $\rho - 1$ when $\rho \geq 3$) agents $i$ with $S_i = \calU$ and the remaining $m - \rho$ (or possibly $m - \rho + 1$ when $\rho \geq 3$) agents $j$ with $S_j = \emptyset$.
	\end{enumerate}
\end{theorem}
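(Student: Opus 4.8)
The plan is to reduce an arbitrary pure equilibrium to one of two canonical shapes and then run a best-response analysis in each parameter window. The reduction is exactly what Corollary~\ref{cor:n:full_empty} and Lemma~\ref{lem:n:growing_strat} supply: any equilibrium either consists solely of empty and single-element strategies, or else contains at least two agents playing $S_i = \calU$, with no agent ever playing a set whose size lies strictly between $1$ and $n$. I will call these the \emph{singleton regime} and the \emph{full-set regime}; the three parts of the theorem are then obtained by intersecting each regime with the stated constraints on $m$, $n$, and $\alpha$.

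The two inputs that drive every inequality are the selection probabilities of $\G_n$ in each regime. In the singleton regime, if an element $e_k$ is carried by exactly $c_k$ agents, then $\G_n$ covers $e_k$ using whichever of those agents appears earliest in the random permutation, so by symmetry each is selected with probability $1/c_k$ and earns $\beta/c_k - \alpha$. In the full-set regime with $\rho$ agents playing $\calU$, the earliest such agent in the permutation covers the universe on the first iteration, so each full agent is selected with probability $1/\rho$ and earns $\beta/\rho - \alpha n$. The single most consequential fact is that inflating a singleton $\{e_k\}$ of multiplicity $c_k$ to a two-element set turns its payoff from $\beta/c_k - \alpha$ into $\beta - 2\alpha$ (the inflated set is always chosen on first-iteration merit), and this is an improvement precisely when $\alpha < \beta(1 - 1/c_k)$; at the knife-edge $\alpha = \beta/2$ the move from a multiplicity-two singleton is exactly payoff-neutral. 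This is the hinge separating parts~\ref{thm:n:ne:2} and~\ref{thm:n:ne:3}.

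Each part now follows by checking deviations against these two formulas. For part~\ref{thm:n:ne:1}, $m \leq n$ removes all pigeonhole pressure, so Lemma~\ref{lem:n:growing_strat} leaves only disjoint singletons and empties; a disjoint singleton pays $\beta - \alpha \geq 0$, and for $\alpha \leq \beta$ no agent profits by emptying, by doubling onto an occupied element (payoff $\beta/2 - \alpha$), or by inflating to a two-element set (payoff $\beta - 2\alpha$). For part~\ref{thm:n:ne:2}, the value $\alpha = \beta/2$ makes multiplicity two payoff-neutral while multiplicity one pays $\beta/2 > 0$ and multiplicity $\geq 3$ pays a negative amount; combined with $m > n$ (which forces some sharing or some idleness) and with the fact that any uncovered element can be profitably grabbed for $\beta/2$ by an empty or doubly-assigned agent—of which $m > n$ guarantees at least one—this forces every element to be covered with multiplicity one or two, yielding the ``doubly-covered plus empties'' shape when $m \geq 2n$ and the ``each element once or twice, no empties'' shape when $n < m \leq 2n$. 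For part~\ref{thm:n:ne:3}, the full-set payoff $\beta/\rho - \alpha n$ is nonnegative exactly when $\alpha \leq \beta/(\rho n)$, while an idle agent strictly prefers to remain empty rather than become the $(\rho+1)$-st full agent exactly when $\alpha > \beta/((1+\rho)n)$; these are the endpoints of the stated window. Here $m > \rho n$ is precisely the hypothesis that lets Lemma~\ref{lem:n:growing_strat} eliminate the singleton regime at these small values of $\alpha$ (where, since $\alpha \leq \beta/(\rho n)$ sits below $\beta/2$ once $n \geq 2$, the inflation move is strictly improving), leaving only full sets. The boundary variant with $\rho - 1$ full agents lives at $\alpha = \beta/(\rho n)$, where adding the $\rho$-th full agent is payoff-neutral; it survives only because the remaining $\rho - 1$ full agents still deter any one of them from shrinking to a singleton, which needs $\rho - 1 \geq 2$ and hence $\rho \geq 3$.

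The part I expect to fight hardest with is making the reduction airtight rather than the arithmetic thresholds. Two points carry the weight. First, computing $\Pr[S_i \in \bcalC(\calS)]$ in genuinely \emph{mixed} profiles—one large set alongside several small ones—precisely enough to invoke Lemma~\ref{lem:n:growing_strat} and Corollary~\ref{cor:n:full_empty}, since there the behavior of $\G_n$ turns on how coverage interleaves across iterations and not on a single symmetry argument. Second, ruling out every ``intermediate'' deviation, to a set of size $2, \dots, n-1$: such a set is selected with probability one on first-iteration merit yet pays a per-element cost, so one must verify it is dominated by either the singleton or the full set at the relevant $\alpha$. A related subtlety is that at very small $\alpha$ (below $\beta/(2n)$) a full-set configuration can itself be stable even when $m \leq n$, so the clean list in part~\ref{thm:n:ne:1} is exhaustive only above that threshold; I would flag this explicitly rather than let it hide inside the case analysis.
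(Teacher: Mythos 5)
Your proposal is correct and follows the same skeleton as the paper's proof: reduce equilibrium candidates to a ``singleton'' shape and a ``full-set'' shape via Lemma~\ref{lem:n:growing_strat} and Corollary~\ref{cor:n:full_empty}, compute the symmetric selection probabilities ($1/c_k$ for a multiplicity-$c_k$ singleton, $1/\rho$ for a full set), and check deviations window by window; your treatment of part~\ref{thm:n:ne:3}, including the $\rho-1$ boundary variant at $\alpha = \beta/(\rho n)$ and why it needs $\rho \geq 3$, matches the paper's argument almost step for step. Where you differ is in thoroughness: the paper handles parts~\ref{thm:n:ne:1} and~\ref{thm:n:ne:2} as bare constructions (``it is easy to verify''), whereas you derive the admissible multiplicities at $\alpha = \beta/2$ (one or two, with every element of $\calU$ covered, since an uncovered element can be grabbed for $\beta/2$ by an empty or doubly-assigned agent) --- a genuine strengthening toward a characterization. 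Two remarks. First, the caveat in your last paragraph is real and exposes looseness in the paper itself: when $\alpha \leq \beta/(2n)$, a profile with two agents playing $\calU$ and the rest empty can be an equilibrium even with $m \leq n$ (for $m \geq 3$ this holds, e.g., throughout $\beta/(3n) \leq \alpha \leq \beta/(2n)$), so the paper's assertion in part~\ref{thm:n:ne:1} that all $m$ agents choose a single element whenever $\alpha < \beta$ is not exhaustive in that range; the paper does not flag this, you do. Second, a small internal inconsistency in your write-up: you cite Lemma~\ref{lem:n:growing_strat} to justify the singleton-plus-empty reduction in part~\ref{thm:n:ne:1}, but the lemma's hypotheses include $m > n$ and $\alpha < \beta/2$, so it says nothing when $m \leq n$ --- which is precisely why the full-set equilibria you later flag can survive there. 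Part~\ref{thm:n:ne:1} should therefore be argued as existence-plus-verification (as the paper implicitly does), or its exhaustiveness restricted to $\alpha > \beta/(2n)$, exactly as your closing paragraph proposes.
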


\begin{lemma}
\label{lem:n:growing_strat}
When $\alpha < \beta / 2$ and $m > n$, if $\calS$ is a Nash equilibrium and there exists an agent $i$ for whom $0 < \Pr[S_i \in \bcalC(\calS)] < 1$ then it must be the case that $S_i = \calU$.
\end{lemma}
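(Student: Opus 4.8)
The plan is to argue by contradiction. I assume $\calS$ is a Nash equilibrium with $\alpha < \beta/2$ and $m > n$, that some agent $i$ has $0 < \Pr[S_i \in \bcalC(\calS)] < 1$, and (toward the contradiction) that $S_i \neq \calU$; I will then exhibit a unilateral deviation for $i$ that strictly raises $u_i$, contradicting the best-response condition for the payoff in \eqref{eq:scig:utility2}. Write $p = \Pr[S_i \in \bcalC(\calS)]$ and $U_{-i} = \bigcup_{j \neq i} S_j$. First note that an empty strategy can never be selected by $\G_n$ (whenever $\calU' \neq \emptyset$ some uncovered element survives, so the greedy maximizer has strictly positive uncovered count in line~7), so $p > 0$ forces $S_i \neq \emptyset$; together with $S_i \neq \calU$ this gives $1 \le |S_i| \le n-1$.

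\emph{First reduction (a free element).} I would first show that equilibrium forces $U_{-i} = \calU$. Suppose instead there is an element $e \in \calU \setminus U_{-i}$. If $i$ deviates to the singleton $\{e\}$, then $e$ lies in no other strategy, so every cover produced by $\G_n$ must contain $\{e\}$; hence $\{e\}$ is selected with probability one, giving utility $\beta - \alpha$. Because $p < 1$ and $|S_i| \ge 1$, the current payoff satisfies $\beta p - \alpha|S_i| \le \beta p - \alpha < \beta - \alpha$, so the deviation is strictly profitable. Thus in equilibrium $U_{-i} = \calU$: every element $i$ might grab is already held elsewhere, so $i$ can no longer manufacture a unique element, and this is precisely the regime in which a fractional selection probability can persist.

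\emph{Main dichotomy (grow versus shrink).} With $U_{-i} = \calU$ fixed, I would pit $S_i$ against two competing deviations. Growing to $\calU$ makes $i$ tie, at the first greedy step, with the $c := |\{ j \neq i : S_j = \calU\}|$ other full-set agents for the single selection that covers everything, so $\Pr[\calU \text{ selected}] = 1/(c+1)$ and this deviation yields $\beta/(c+1) - \alpha n$; shrinking to $\emptyset$ yields $0$. I would also establish a monotonicity tool — that for fixed $\calS_{-i}$ the map $T \mapsto \Pr[T \in \bcalC(T,\calS_{-i})]$ is nondecreasing under inclusion — by coupling the two runs of $\G_n$ on a common permutation $\pi$ and tracking that the larger set never has fewer uncovered elements at any step. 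Feeding these into the equilibrium inequalities
\[
\beta p - \alpha|S_i| \ge 0, \qquad \beta p - \alpha|S_i| \ge \tfrac{\beta}{c+1} - \alpha n,
\]
and combining them with $0 < p < 1$, $1 \le |S_i| \le n-1$, and $\alpha < \beta/2$, I aim to corner the pair $(p,|S_i|)$ so tightly that no admissible value remains unless $S_i = \calU$.

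\textbf{The main obstacle.} The genuinely hard case is the boundary regime where both the grow-to-$\calU$ and shrink-to-$\emptyset$ deviations are only weakly improving; there a third, sharper deviation is required, and this is exactly where I expect the hypotheses $\alpha < \beta/2$ and $m > n$ to be consumed. My intended lever is that $0 < p < 1$ itself certifies a tie: there is a permutation in which $S_i$ loses the greedy selection to a competing set of equal uncovered count and one in which it wins. Using that competitor together with $U_{-i} = \calU$ and the counting slack from $m - 1 \ge n$, I would construct a reduced (single-element or smaller) deviation whose selection probability, read off from the tie structure, strictly beats the current payoff. The two places I expect to spend the most care are making the monotonicity coupling fully rigorous under the permutation-order tie-breaking of $\G_n$ (the strict inequality in line~7), and pinning down the competitor's exact selection probability so that the boundary-case deviation is provably strict.
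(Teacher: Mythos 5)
Your preliminary steps are sound: empty strategies are indeed never selected by $\G_n$; your first reduction is correct (if some $e \in \calU$ lies outside $\bigcup_{j \neq i} S_j$, deviating to $\{e\}$ gives $\beta - \alpha > \beta p - \alpha |S_i|$, so in equilibrium every element is held by some other agent); and the grow-to-$\calU$ deviation is selected with probability exactly $1/(c+1)$. But the proof itself never happens. The two equilibrium inequalities you plan to "corner" $(p, |S_i|)$ with, namely $\beta p - \alpha|S_i| \geq 0$ and $\beta p - \alpha|S_i| \geq \beta/(c+1) - \alpha n$, are simultaneously satisfiable on a large open region compatible with all your side conditions: take $|S_i| = 1$, $p = 3/4$, $\alpha = 2\beta/5 < \beta/2$, $n \geq 2$, any $c \geq 0$; both inequalities hold with slack, so no contradiction can ever come out of them alone. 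The entire burden therefore falls on the "third, sharper deviation" that you defer to the boundary case and describe only as an intention. Worse, the specific deviation you gesture at (shrinking to a singleton read off from the tie structure) is undermined by your own first reduction: once $\bigcup_{j\neq i} S_j = \calU$, every element of any singleton $i$ might play is already held by another agent, so such a deviation carries no selection guarantee whatsoever -- its probability can be $0$. So there is a genuine gap: the decisive step is missing, and the tools you have assembled provably cannot supply it.

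The idea you are missing -- and the engine of the paper's own proof -- is to augment by a \emph{single} element rather than jump all the way to $\calU$ or shrink. The paper sets $\mu = \max_{j} |S_j|$ and splits into two cases. If $\mu = n$, some agent plays $\calU$, so the first greedy iteration picks a full set and the algorithm halts; hence every strategy $S_i \neq \calU$ has $\Pr[S_i \in \bcalC(\calS)] = 0$, contradicting $p > 0$ with no deviation argument at all. If $1 < \mu < n$, the paper takes a maximum-cardinality strategy $S_i$ with selection probability below $1$ and deviates to $S_i' = S_i \cup \{e_k\}$ for one new element $e_k \in \calU \setminus S_i$: now $S_i'$ is the \emph{unique} maximizer at the first iteration, so it is selected with probability $1$ at marginal cost only $\alpha$ (not $\alpha n$), while $p < 1$ forces a tie at the top, which the paper uses to bound $u_i(\calS) \leq \beta/2 - \alpha\mu$; with $\alpha < \beta/2$ the deviation is then strictly improving. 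That cheap, sharp one-element move is exactly what your grow/shrink dichotomy lacks. Finally, your monotonicity lemma is true and your coupling proves it (run both executions on a common permutation; until the larger set is picked, the two runs select identical sets, and at the step where the smaller set would be picked, the larger set is the first-in-order maximizer), but it only gives weak inequalities and so cannot produce the strict improvement a Nash-equilibrium contradiction requires; once the one-element augmentation is in hand, it is not needed at all.
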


\begin{proof} 
Assume that $\calS$ is a Nash equilibrium. Set $\mu = \max_{j \in M} |S_j|$, and denote by $\calS_{max} \subset \calS$ the set of strategies in $\calS$ with cardinalities $\mu$; that is, $\calS_{max} = \{ S_j : |S_j| = \mu \}$. By definition, the greedy algorithm $\G_n$ will always select an element from $\calS_{max}$ in its first iteration. 

Suppose that $1 < \mu < n$ and let $S_i \in \calS_{max}$ be one of these $\mu$-cardinality strategies. Since $|S_i| > 0$, it must be the case that $u_i(\calS) \geq 0$ (otherwise $\calS$ would certainly not be a Nash equilibrium). From our assumption that $\Pr[S_i \in \bcalC(\calS)] < 1$, we can infer that agent $i$'s utility is at most $u_i(\calS) \leq \beta / 2 - \alpha \mu$. However, by deviating to a strategy $S_i' = S_i \cup \{e_k\}$ for some arbitrary element $e_k \in \calU \setminus S_i$, agent $i$ can weakly increase their utility to $u_i(S_i', \calS_{-i}) = \beta - (1 + \mu) \alpha \geq u_i(\calS)$, contradicting our assumption that $\calS$ was a Nash equilibrium.

Suppose instead that $\mu = n$ and let $S_i \in \calS \setminus \calS_{max}$ be a strategy with cardinality less than $\mu$. In this case, $\G_n$ will never pick $S_i$ since the algorithm will halt after picking an element from $\calS_{max}$ in its first iteration. This contradicts our assumption that $\Pr[S_i \in \bcalC(\calS)] > 0$.
\end{proof} 

The next corollary follows immediately from Lemma~\ref{lem:n:growing_strat}.

\begin{corollary}
\label{cor:n:full_empty}
If $\calS$ is a Nash equilibrium in which there is at least one agent $i$ with $S_i = \calU$, then for every other agent $j$ either $S_j = \calU$ or $S_j = \emptyset$.
\end{corollary}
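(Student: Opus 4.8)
The plan is to leverage the structural fact, already isolated in the proof of Lemma~\ref{lem:n:growing_strat}, that the presence of a single $\calU$-playing agent forces $\G_n$ to terminate after one iteration. Assume $\calS$ is a Nash equilibrium and fix an agent $i$ with $S_i = \calU$. Then $\mu \defeq \max_{j \in M} |S_j| = n$, since no strategy can exceed $n$ elements while $S_i$ attains $n$. I would argue by contradiction: suppose some agent $j \neq i$ has $S_j \notin \{\emptyset, \calU\}$, equivalently $0 < |S_j| < n$, and derive a profitable deviation.

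First I would establish that such a $j$ is never selected, i.e. $\Pr[S_j \in \bcalC(\calS)] = 0$. This is exactly the $\mu = n$ case in the proof of Lemma~\ref{lem:n:growing_strat}: in the very first pass of the while-loop every full set has all $n$ elements uncovered, whereas $S_j$ has only $|S_j| < n$ uncovered, so the greedy update rule (line~7 of Figure~\ref{fig:sc:greedy}) strictly prefers a full set to $S_j$. The argmax of the first iteration is therefore always an agent playing $\calU$, after which $\calU' = \emptyset$ and the loop halts, so $S_j$ has no opportunity to be chosen on any later iteration. I would emphasize that this holds regardless of how many agents play $\calU$, since the random permutation $\pi$ only reshuffles selection probability among the full sets and never assigns a partial set positive probability.

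The second step is the immediate best-response argument. Since $\Pr[S_j \in \bcalC(\calS)] = 0$ and $|S_j| > 0$, the utility in \eqref{eq:scig:utility2} is $u_j(\calS) = \beta \cdot 0 - \alpha |S_j| = -\alpha |S_j| < 0$, using only $\alpha > 0$. Deviating to the empty strategy $S_j' = \emptyset$ yields $u_j(S_j', \calS_{-j}) = 0 > u_j(\calS)$, contradicting the assumption that $\calS$ is a Nash equilibrium. Hence no agent $j \neq i$ can have cardinality strictly between $0$ and $n$; every $j \neq i$ satisfies $|S_j| \in \{0, n\}$, and because $\calU$ is the unique $n$-element subset of $\calU$, this means $S_j = \emptyset$ or $S_j = \calU$, as claimed.

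I do not expect a serious obstacle, since the analytic core is inherited from Lemma~\ref{lem:n:growing_strat}. The one point that deserves care is that the bare \emph{statement} of the lemma is not quite enough on its own: it leaves open the possibility of a partial set $S_j$ with $\Pr[S_j \in \bcalC(\calS)] = 1$, and ruling this out is precisely what the halting observation of the previous paragraph accomplishes. Once that observation is invoked to force $\Pr[S_j \in \bcalC(\calS)] = 0$ for every partial set, the corollary closes via the single-line deviation to $\emptyset$.
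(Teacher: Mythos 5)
Your proof is correct, and it is in substance the argument the paper has in mind --- but the paper never writes it down: its entire ``proof'' is the sentence that the corollary ``follows immediately from Lemma~\ref{lem:n:growing_strat}.'' What you do differently is to make the derivation self-contained by re-deriving the $\mu = n$ halting fact from the lemma's \emph{proof} (a full set wins the first greedy iteration, $\calU'$ becomes empty, the loop halts, so every partial set has selection probability zero) and then closing with the deviation to $\emptyset$. This buys two things the paper's one-liner does not. First, as you correctly observe, the lemma's \emph{statement} is genuinely insufficient: its contrapositive only says a partial $S_j$ has $\Pr[S_j \in \bcalC(\calS)] \in \{0,1\}$, and the $\Pr = 1$ case must be killed by the halting observation, which lives in the lemma's proof rather than its statement. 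Second, the lemma is stated only under the hypotheses $\alpha < \beta/2$ and $m > n$, whereas the corollary carries no such restriction; a literal ``immediate'' appeal to the lemma would therefore not even apply across the corollary's full range of parameters. Your argument uses only the structure of $\G_n$ and $\alpha > 0$, so it proves the corollary as stated, unconditionally --- a modest but real strengthening of what the paper justifies.
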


We now turn to proving the main result of this section -- Theorem~\ref{thm:n:ne}.

\begin{proof}[Proof of Theorem~\ref{thm:n:ne}]
We start by proving part~\ref{thm:n:ne:1}. If $\alpha < \beta$, then all $m$ agents will choose a single element in their strategy, and since $n \geq m$, no two agents will choose the same element. If $\alpha = \beta$, then agents will be indifferent between choosing a single (distinct) element or an empty strategy since both yield the same utility (zero).

For part~\ref{thm:n:ne:2}, if $\alpha = \beta / 2$ then we can create a Nash equilibrium as follows: Denoting the agents by $M = \{1, 2, \dots, m\}$ and the universal set of elements by $\calU = \{e_1, e_2, \dots, e_n\}$, set $S_i = S_{i + n} = \{e_i\}$ for agents $i \in \{ 1, \dots, m \}$. If $i+n > m$ then just set $S_i = \{e_i\}$, and if $m > 2n$ then, for the remaining agents $j \in \{ 2n, 2n+1, \dots, m\}$, set $S_j = \emptyset$. It is easy to verify that this results in a Nash equilibrium since no agent would stand to (strictly) gain by dropping their element (if they have one), nor would they (strictly) gain by adding an additional element.

For part~\ref{thm:n:ne:3}, fix a value for $\rho \geq 3$. If $\calS$ is a Nash equilibrium with exactly $\rho$ agents $i$ with $S_i = \calU$, then $\G_n$ selects a particular one of these subsets with probability $1 / \rho$. In order for $S_i = \calU$ to be a best response given that at most $\rho - 1$ other agents $j$ have $S_j = \calU$, it must be the case that $\alpha \leq \beta / \rho n$. (If there are exactly $\rho - 1$ other agents $j$ with $S_j = \calU$ and $\alpha = \beta / \rho n$ then agent $i$'s best response could be either $S_i = \calU$ or $S_i = \emptyset$.)

If $\rho = 2$, then the only Nash equilibrium for this regime is when there are two agents $i$ and $j$ with $S_i = S_j = \calU$. If instead there was only one agent $i$ with $S_i = \calU$ then they could deviate to the strategy $S_i' = \{e_k\}$ for some arbitrary element $e_k \in \calU$ and strictly increase their utility. (A similar argument holds for why there is no Nash equilibrium for the case when $\rho = 1$.)

Finally, from the preceding argument, Lemma~\ref{lem:n:growing_strat}, and Corollary~\ref{cor:n:full_empty}, it follows that, if $m > 2n$ and $\alpha < \beta / 2$, any Nash equilibrium must include at least two agents $i$ and $j$ with $S_i = S_j = \calU$.
\end{proof} 

\section{Concluding Remarks}
\label{sec:con}

This paper identified the notion of \emph{algorithm instance games (AIGs)}, referring to game theoretic models (\eg games and mechanisms) in which outcomes are resolved from joint strategy profiles algorithmically.  AIGs offer a perspective into a fundamentally algorithmic aspect of game theory that has heretofore received only incidental attention in isolated contexts. The primary conceptual contribution of this paper is the explicit identification of AIGs, which we believe will appeal to the broader interests of the algorithmic game theory community, presenting an interesting topic for future research.

Our primary technical contribution involved the characterization of pure Nash equilibrium strategies for two versions of the \emph{set-cover instance game (SCIG)} -- an AIG based on the set-cover optimization problem. Both versions of this game utilize the simple greedy algorithm for resolving outcomes from agents' joint strategy profiles. They differ in that in one version, which we denote by SCIG($\G_d$), the greedy algorithm iterates according to a fixed ordering of the agents' strategies, and in the other, denoted SCIG($\G_n$), the greedy algorithm iterates according to a permutation of agents' strategies chosen uniformly at random. Our analysis of these games demonstrate how a slight change in the outcome resolution algorithm can lead to significantly different equilibrium predictions. Or, cast another way, our findings show that agents' knowledge of the greedy algorithm's iteration sequence \emph{a priori} will have a material effect on the set of Nash equilibrium strategies.

\subsection{Acknowledgments}
\label{sec:con:ack}
We would like to extend our thanks to our colleagues at UC Davis and the anonymous reviewers who's feedback on an earlier version of this paper helped to improve it considerably.

\bibliographystyle{alpha}
\bibliography{bib}

\end{document}